%-----------------------------------------------------------------
% PREFACE
%-----------------------------------------------------------------

% main parameters of the document
\documentclass[12pt]{article}
\usepackage[T2A]{fontenc}
\usepackage[utf8x]{inputenc}
\usepackage[english]{babel}
\usepackage[a4paper]{geometry}

% text packages
\usepackage{comment,titling, epigraph, textcomp,eufrak,mathrsfs, authblk}
\usepackage{amssymb,amsmath,amsfonts,amsthm}
\usepackage{mathtools}
\usepackage{multicol, subcaption}
\usepackage{physics}
\usepackage{fancyvrb}
\usepackage{appendix}

% graphics packages
\usepackage{graphicx}
\graphicspath{ {images/} }
\usepackage{tikz, tikz-cd}

% bibliography packages
\usepackage[nottoc]{tocbibind}

%-----------------------------------------------------------------
% ENVIRONMENTS
%-----------------------------------------------------------------
\newtheorem{theorem}{Theorem}

\newtheorem{proposition}{Proposition}
\newtheorem{lemma}{Lemma}
\newtheorem{definition}{Definition}

%\renewenvironment{proof}
%{ \par \noindent $ \triangleleft $ }
%{$ \triangleright $\\}

%-----------------------------------------------------------------
% FUNCTIONS
%-----------------------------------------------------------------
\newcommand{\Set}[2]{ \left\{\, #1 \: \middle| \: #2 \, \right\} }

\newcommand{\Abs}[1]{\left| #1 \right|}

\newcommand{\Norm}[1]{\left\lVert #1 \right\rVert}

\newcommand{\InnerProduct}[2]{\langle #1 , #2 \rangle}

\renewcommand{\Tr}[1]{\operatorname{tr} \left( #1 \right)}

\newcommand{\TrPart}[2]{ \operatorname{tr}_{#1} \left( #2 \right)}

\newcommand{\Span}[1]{ \operatorname{span} #1 }

\newcommand{\Imag}[1]{ \operatorname{Im} #1 }

%-----------------------------------------------------------------
% NOTATION
%-----------------------------------------------------------------

% fields

\newcommand{\Comp}{\mathbb{C}}

% Hilbert spaces
% \newcommand{\Hil}{\mathcal{H}}
\newcommand{\HilH}{\mathcal{H}}
\newcommand{\HilK}{\mathcal{K}}
\newcommand{\HilE}{\mathcal{E}}

% algebras of bounded operators
\newcommand{\Bound}[1]{\mathfrak{B}\left( #1 \right)}
\newcommand{\BHilHtoK}{\Bound{\HilH,\HilK}}
\newcommand{\BHilH}{\Bound{\HilH}}
\newcommand{\BHilK}{\Bound{\HilK}}
\newcommand{\BHilE}{\Bound{\HilE}}

% compact operators

% trace-class operators
\newcommand{\TrClass}[1]{\mathfrak{T}\left( #1 \right)}
\newcommand{\THilH}{\TrClass{\HilH}}
\newcommand{\THilK}{\TrClass{\HilK}}
\newcommand{\THilE}{\TrClass{\HilE}}

% state spaces
\newcommand{\States}[1]{\mathfrak{S}\left( #1 \right)}
\newcommand{\SHilH}{\States{\HilH}}

% pure states

% effect spaces
\newcommand{\Effects}[1]{\mathfrak{E}\left( #1 \right)}
\newcommand{\EHilH}{\Effects{\HilH}}

% matrices

% operator graphs
\newcommand{\GGraph}{\mathcal{G}}

\newcommand{\SGraph}{\mathcal{S}}

% borel subsets

% continuous functions

% integreable functions

%-----------------------------------------------------------------
% DOCUMENT
%-----------------------------------------------------------------
\title{Properties of operator systems, corresponding to channels}
\date{\today}
\author[1]{V.I. Yashin \footnote{yashin.vi@phystech.edu}}
\affil[1]{Steklov Mathematical Institute of Russian Academy of Sciences, 8 Gubkina St., Moscow, 119991, Russia}
\setcounter{Maxaffil}{0}

\begin{document}
\maketitle

% ABSTRACT
\begin{abstract}
It was shown in \cite{Duan_2009}, that in finite-dimensional Hilbert spaces each operator system corresponds to some channel, for which this operator system will be an operator graph. This work is devoted to finding necessary and sufficient conditions for this property to hold in infinite-dimensional case.
\end{abstract}

\section{Introduction}
\label{intro}
The theory of zero-error communication via classical channels was first introduced in C.Shannon's renown paper \cite{Shannon_1956}. He considered the problem of constructing zero-error code as a problem of finding maximal independent subset in a \emph{confusability graph} of a channel, and defined zero-error capacity of a channel as a regularisation of an independence number of a graph. The notion of zero-error capacity for quantum channels was first introduced in \cite{Medeiros_de_Assis_Francisco_2004}. The authors of \cite{Duan_Severini_Winter_2010} proposed studying zero-error communication for quantum channels as a "noncommutative" graph-theoretic problem. They introduced a \emph{noncommutative confusability graph} for a noisy quantum channel and considered the problem of finding zero-error capacity for this graph. To find a correcting code for a quantum channel means to find a \emph{quantum anticlique} for this channel \cite{Weaver_2016,Amosov_2018,Amosov_2019,Amosov_Mokeev_1_2018,Amosov_Mokeev_2_2018,Amosov_Mokeev_2019}. This gave a perspective for creation of noncommutative graph theory as a theory of \emph{operator systems} \cite{Choi_Effros_1977,Paulsen_2003}, which are studied as a part of functional analysis \cite{Dosi_2019}. Some notions of classical graph theory were generalized to the quantum case, such as Lov\'asz number \cite{Duan_Severini_Winter_2010}, Ramsey theory \cite{Weaver_2016}, Tur\'an problem \cite{Weaver_2018} and chromatic number \cite{Cameron_et_al_2006}.

Mostly, operator graphs were studied over finite-dimensional Hilbert space. Generalising the theory to infinite dimensions means to come across a number of functional-analytic complexities, because the structure of quantum channels and operator systems in infinite-dimensional case is rather complicated.

In \cite{Duan_2009,Cubitt_Chen_Harrow_2011} it was shown, that in finite-dimensional Hilbert spaces each operator system is a noncommutative confusability graph for some channel. In this work we consider the generalisation of this property for infinite-dimensional separable Hilbert spaces. Differently speaking, we investigate the images of unital completely positive maps in details. In section \ref{notation} we remind basic notions of quantum communication theory. Section \ref{preliminaries} contains known results about finite-dimensional quantum confusability graphs, which will be used in section \ref{main_results}. In section \ref{main_results} we find the necessary and sufficient conditions for an operator system to be a confusability graph of a given quantum channel.

\section{Notation}
\label{notation}
Let us introduce some basic notions. We will denote Hilbert spaces as $ \HilH, \HilK, \HilE $, bounded operators between $\HilH$ and $\HilK$ as $\BHilHtoK$. In case $ \HilH = \HilK $ this space is an algebra, which will be denoted as $\BHilH$. We will write the identity operator in $\BHilH$ as $I$. The space of self-adjoint operators is denoted as $\BHilH_{sa}$, the space of trace-class operators on $\HilH$ is denoted as $\THilH$. The operator $ \rho \in \THilH $ is called a \emph{state}, if it is self-adjoint, positive and $\Tr{\rho} = 1 $. The set of all states is denoted as $\SHilH$. The operator $ E \in \BHilH $ is called an \emph{effect}, if it is self-adjoint and $0 \leq E \leq I$. The set of all effects is denoted as $\EHilH$.

We will consider various topologies on $\BHilH$. By default, $\BHilH$ has a norm topology. Topology on $\BHilH$, generated by seminorms $ \Norm{\cdot}_x : B \mapsto \Norm{B x} $, is called \emph{strong operator topology} and is denoted $so$. Topology on $\BHilH$ generated by seminorms $ \Norm{\cdot}_{x,y} : B \mapsto \Abs{\InnerProduct{x}{By}} $ is called \emph{weak operator topology} and is denoted $wo$. The space $\BHilH$ is dual to $\THilH$. Therefore, there is a $*$-weak topology on $\BHilH$, which will be called \emph{ultraweak} and denoted $\sigma wo$. It is generated by the set of seminorms $ \Norm{\cdot}_\rho : B \rightarrow \Abs{ \Tr{ B \rho } }, \rho \in \THilH $. The closures of $so$ and $wo$ are equal on convex sets. Also, on bounded convex sets the topologies $ so, wo, \sigma wo $ are equal \cite[p.67-71]{Takesaki_2001}.

Let us consider a linear trace-preserving map $ \Phi : \THilH \rightarrow \THilK $. We will call \emph{dual} to $\Phi$ the unit preserving map $ \Phi^* : \BHilK \rightarrow \BHilH $, for which for every $ \rho \in \THilH, B \in \BHilK $ it is true that
\begin{equation}
  \Tr{\rho \Phi^*(B)} =  \Tr{\Phi(\rho) B}
\end{equation}
The map $\Phi$ is called \emph{quantum channel} \cite{Holevo_2012}, if $\Phi^*$ is also \emph{completely positive}, i.e. $ \Phi^* \otimes Id_n $ is a positive map for all natural $n$, where $Id_n : \Bound{\Comp^n} \rightarrow \Bound{\Comp^n}$ is an identity map.

Let $ \Phi : \THilH \rightarrow \THilK $ be a quantum channel. By the Stinespring dilation theorem, there exists a Hilbert space $\HilE$ and an isometry $ V : \HilH \rightarrow \HilK \otimes \HilE $, such that
\begin{equation}
  \Phi(\rho) = \TrPart{\HilE}{ V \rho V^* }
\end{equation}
Then, the dual channel has the form
\begin{equation}
  \Phi^* : \BHilK \rightarrow \BHilH, \quad B \mapsto V^* ( B \otimes \HilE ) V
\end{equation}
The channel $ \hat{\Phi} $ is called \emph{complementary} \cite{Holevo_2007} to $\Phi$ if
\begin{equation}
\hat{\Phi} : \THilH \rightarrow \THilE, \: \rho \mapsto \TrPart{\HilK}{ V \rho V^* }
\end{equation}
The dual to complementary channel has the form
\begin{equation} \label{eq:hatphistar}
  \hat{\Phi}^*: \BHilE \rightarrow \BHilH, \quad B \mapsto V^* (I_\HilK \otimes B) V
\end{equation}

We are ready to define a notion of operator graph for infinite-dimensional case. In order to do it, we use the relation from \cite[Lemma 1]{Duan_Severini_Winter_2010}.
\begin{definition}
  The \emph{operator graph} of a channel $ \Phi : \BHilH \rightarrow \BHilK $ is a space $ \GGraph(\Phi) = \hat{\Phi}^*(\BHilE) $.
\end{definition}
The self-adjoint, including unit, subspace of $\BHilH$ is called \emph{operator system} \cite{Choi_Effros_1977}. Being an image of completely positive unital map, $\GGraph(\Phi)$ is an operator system.

\section{Preliminaries}
\label{preliminaries}

In the paper \cite{Duan_2009} it was shown that the operator system over finite-dimensional Hilbert space is always an operator graph for some channel. In this section we will give such a construction, following the ideas from \cite[Lemma 2]{Duan_2009} and \cite{Shirokov_Shulman_2016}. In section \ref{main_results} this construction will be generalised to infinite-dimensional case.

It is useful to know the correspondence between the Krauss operators of channel $\Phi$ and it's operator graph $\GGraph(\Phi)$.
\begin{proposition}
  If a channel $ \Phi : \THilH \rightarrow \THilK $ has a Krauss representation
  \begin{equation}
    \Phi(\rho) = \sum_{k=1}^d V_k \rho V_k^*
  \end{equation}
  then the operator graph has the form
  \begin{equation}
    \GGraph(\Phi) = \Span{\Set{V_n^* V_m}{1 \leq n,m \leq d}}
  \end{equation}
\end{proposition}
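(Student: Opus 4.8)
The plan is to unfold the definitions from the Notation section and match the two pictures of the complementary channel. Starting from the Kraus representation $\Phi(\rho) = \sum_{k=1}^d V_k \rho V_k^*$ with $V_k : \HilH \to \HilK$, I would first build the canonical Stinespring dilation: take $\HilE = \Comp^d$ with orthonormal basis $\{e_k\}_{k=1}^d$ and define $V : \HilH \to \HilK \otimes \HilE$ by $V x = \sum_{k=1}^d (V_k x) \otimes e_k$. The trace-preservation $\sum_k V_k^* V_k = I_\HilH$ is exactly the statement that $V$ is an isometry, and one checks directly that $\TrPart{\HilE}{V \rho V^*} = \sum_k V_k \rho V_k^* = \Phi(\rho)$, so this $V$ implements the Stinespring dilation used in the definition of $\hat\Phi$.

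Next I would compute $\hat{\Phi}^*(B) = V^*(I_\HilK \otimes B)V$ for $B \in \BHilE$, using formula \eqref{eq:hatphistar}. With $\HilE = \Comp^d$, the operator $B$ is a $d \times d$ matrix $B = \sum_{n,m} b_{nm}\, e_n e_m^*$ in the chosen basis. A direct calculation gives, for $x \in \HilH$,
\begin{equation}
  V^*(I_\HilK \otimes B) V x = \sum_{n,m=1}^d b_{nm}\, V_n^* V_m\, x,
\end{equation}
since $\langle (V_n y)\otimes e_n, (I_\HilK \otimes B)((V_m x)\otimes e_m)\rangle = b_{nm}\langle V_n y, V_m x\rangle$ after pairing the $\HilE$-components. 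Hence $\hat{\Phi}^*(B) = \sum_{n,m} b_{nm} V_n^* V_m$, which shows $\GGraph(\Phi) = \hat{\Phi}^*(\BHilE)$ is contained in $\Span\{V_n^* V_m : 1 \le n,m \le d\}$; conversely, choosing $B = e_n e_m^*$ shows every $V_n^* V_m$ lies in the image, so the two spans coincide.

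The only subtlety I anticipate is a bookkeeping one: the operator graph is defined in terms of a Stinespring dilation, which is unique only up to a partial isometry on the environment, so strictly speaking one should note that $\GGraph(\Phi)$ does not depend on the chosen dilation (this follows because two dilations $V, V'$ satisfy $V' = (I_\HilK \otimes W)V$ for an isometry $W$, and $V'^*(I_\HilK \otimes B)V' = V^*(I_\HilK \otimes W^* B W)V$, so the images agree). Once that is in place, the computation above with the explicit Kraus dilation settles the statement. I would also remark that the same identity underlies the standard fact $\GGraph(\Phi) = \Phi^*\!\big(\BHilK\big)'$-type relations, but no more than the elementary matrix computation is needed here; the proof is essentially a direct verification, with no real obstacle beyond keeping track of which index sits in which tensor factor.
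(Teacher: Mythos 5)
Your proposal is correct and follows essentially the same route as the paper: both construct the canonical Stinespring isometry $V\varphi = \sum_k (V_k\varphi)\otimes\ket{k}$ from the Kraus operators and compute $\hat{\Phi}^*(B)$ explicitly to identify its image with $\Span{\Set{V_n^* V_m}{1 \leq n,m \leq d}}$ (the paper passes through $\hat{\Phi}$ and dualizes, while you evaluate $V^*(I_\HilK\otimes B)V$ directly, a negligible difference). Your additional remark on independence of the chosen dilation is a reasonable extra check but not part of the paper's argument.
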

This fact was proven in \cite[Lemma 1]{Duan_Severini_Winter_2010}. For convenience of reader, we will give this proof:
\begin{proof}
  Let $\HilE$ be a $d$-dimensional Hilbert space with orthonormal basis $ \{ \ket{k} \}_{k=1}^d $. We will define the operator
  \begin{equation}
    V : \HilH \rightarrow \HilK \otimes \HilE, \quad  \ket{\varphi} \mapsto \sum_{k=1}^d (V_k \ket{\varphi}) \otimes\ket{k}
  \end{equation}
  Then
  \begin{equation}
    \Phi(\rho) = \TrPart{\HilE}{ V \rho V^* } = \TrPart{\HilE}{ \sum_{n,m} (V_n \rho V_m^*) \otimes \ket{n}\bra{m} }
  \end{equation}
  The complementary channel has the form
  \begin{equation}
    \hat{\Phi}(\rho) = \TrPart{\HilK}{ V \rho V^* } = \sum_{n,m} \Tr{V_n \rho V_m^*}  \ket{n}\bra{m}
  \end{equation}
  \begin{equation}
    \hat{\Phi}^*(B) = \sum_{n,m} \Tr{\ket{n}\bra{m} B} V_m^* V_n
  \end{equation}
  Varying $B$ over $\BHilE$, we see that
  \begin{equation}
    \GGraph(\Phi) = \hat{\Phi}^*(\BHilE) = \Span{\Set{V_n^* V_m}{1 \leq n,m \leq d}}
  \end{equation}
\end{proof}

Let us fix some operator system $\SGraph \subset \BHilH$, where $\HilH$ is finite-dimensional. We want to construct a channel $\Phi$, such that $\SGraph = \GGraph(\Phi)$. We will need the following lemma of Duan:

\begin{proposition}{\cite[Lemma 2]{Duan_2009}} \label{Duans_lemma}
There exists a linear basis $\{A_k\}_{k=1}^d$ in $\SGraph$, such that $ 0 \leq A_k \leq I, \sum_{k=1}^d A_k = I $.
\end{proposition}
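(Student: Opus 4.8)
The plan is to start from the fact that an operator system $\SGraph$ is a self-adjoint subspace containing $I$, and to exploit that $\SGraph \cap \BHilH_{sa}$ is a real vector space whose complexification is $\SGraph$. Fix a real-linear basis $\{S_1, \dots, S_d\}$ of $\SGraph \cap \BHilH_{sa}$; since $I \in \SGraph$, we may and do take $S_1 = I$ (or more conveniently keep $I$ in the span and adjust at the end). The goal is to replace this basis by one consisting of operators $A_k$ with $0 \le A_k \le I$ and $\sum_k A_k = I$.

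The key step is a rescaling-and-shifting argument. Each $S_k$ is self-adjoint, hence bounded in norm, so for a sufficiently small $\varepsilon_k > 0$ the operator $\tfrac{1}{d} I + \varepsilon_k S_k$ satisfies $0 \le \tfrac{1}{d} I + \varepsilon_k S_k \le I$, because $\Norm{\varepsilon_k S_k} \le \tfrac{1}{d}$ forces the spectrum of $\tfrac{1}{d} I + \varepsilon_k S_k$ into $[0, \tfrac{2}{d}] \subseteq [0,1]$. Define $B_k = \tfrac{1}{d} I + \varepsilon_k S_k$ for $k = 1, \dots, d$. Each $B_k$ lies in $\SGraph$ (it is a real-linear combination of $I$ and $S_k$, both in $\SGraph$), and $\{B_1, \dots, B_d\}$ is still a linear basis of $\SGraph \cap \BHilH_{sa}$ since the change-of-basis matrix from $\{I, S_2, \dots, S_d\}$-type coordinates is triangular with nonzero diagonal (here one should first arrange $S_1 = I$, so that $B_1 = (\tfrac1d + \varepsilon_1) I$ and the remaining $B_k$ differ from the old basis vectors by multiples of $I$ and nonzero scalings — invertibility is immediate). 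Then $\sum_{k=1}^d B_k = I + \sum_k \varepsilon_k S_k$, which is close to $I$ but not equal; a final correction absorbs the discrepancy $T := \sum_k \varepsilon_k S_k \in \SGraph_{sa}$ by subtracting $\tfrac{1}{d} T$ from each $B_k$ — that is, set $A_k = B_k - \tfrac{1}{d} T$. For $\varepsilon_k$ small enough we still have $0 \le A_k \le I$ by the same norm estimate (now with $2/d$ and $1/d$ slack replaced by slightly smaller margins), each $A_k \in \SGraph$, $\sum_k A_k = \sum_k B_k - T = I$, and the set $\{A_k\}$ remains a basis since we have only performed an invertible affine shift by a fixed element within the span.

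The main obstacle is bookkeeping rather than a deep idea: one must choose the $\varepsilon_k$ in the right order and with enough slack so that all the positivity constraints $0 \le A_k \le I$ survive simultaneously after the final shift, and one must verify that the resulting family is genuinely linearly independent (not merely spanning) — this is where fixing $S_1 = I$ from the outset makes the change-of-basis matrix transparently invertible. A slicker alternative, which I would mention as a remark, is to observe that the set of $d$-tuples $(A_1, \dots, A_d)$ with $A_k \in \SGraph_{sa}$, $0 \le A_k \le I$, $\sum_k A_k = I$ is a nonempty (it contains $(\tfrac1d I, \dots, \tfrac1d I)$) compact convex set, and a generic tuple in it must be linearly independent because the non-independent tuples form a proper algebraic subvariety — but the explicit construction above is more elementary and suffices.
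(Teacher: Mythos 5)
Your construction follows the same basic idea as the paper's (perturb $\tfrac{1}{d}I$ by small multiples of a self-adjoint basis, then correct so that the sum is $I$), but the justification of the final step contains a genuine gap: it is simply not true that subtracting a fixed vector from every element of a basis (``an invertible affine shift by a fixed element within the span'') preserves linear independence. For instance, subtracting the barycenter $\tfrac{1}{d}\sum_j B_j$ from each $B_k$ produces a family that sums to $0$, hence is dependent. The correct criterion is this: if the shift is $w=\sum_j \gamma_j B_j$, then $\{B_k-w\}_{k=1}^d$ is a basis if and only if $\sum_j\gamma_j\neq 1$, since the transition matrix $(\delta_{jk}-\gamma_j)_{j,k}$ has determinant $1-\sum_j\gamma_j$. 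So you must actually expand $w=\tfrac{1}{d}T$ in the basis $\{B_j\}$ and check this. The check does succeed: with $S_1=I$, $B_1=(\tfrac1d+\varepsilon_1)I$ and $B_k=\tfrac1d I+\varepsilon_k S_k$, a short computation gives $\tfrac1d T=\sum_j\gamma_j B_j$ with $\sum_j\gamma_j=\frac{d\varepsilon_1}{1+d\varepsilon_1}<1$, so the determinant equals $\frac{1}{1+d\varepsilon_1}\neq0$ and your $\{A_k\}$ is indeed a basis --- but this verification is exactly what is missing, and without it the proof is incomplete. Note that the paper sidesteps this issue by absorbing the \emph{entire} correction into a single element rather than distributing it: it sets $A_k=\beta(I+\alpha B_k)$ for $k\geq2$ and $A_1=I-\sum_{k\geq2}A_k$; with the initial basis chosen to contain $I$, the transition matrix is then invertible by inspection (its determinant is $(\alpha\beta)^{d-1}$), which is the cleaner bookkeeping.

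Your ``slicker alternative'' remark is also not yet a proof as stated. Nonemptiness of the convex set of admissible tuples, witnessed by $(\tfrac1d I,\dots,\tfrac1d I)$, does not imply that the dependent tuples form a \emph{proper} subvariety of the affine slice $\Set{(A_1,\dots,A_d)}{A_k\in\SGraph\cap\BHilH_{sa},\ \sum_k A_k=I}$; to know properness you must exhibit at least one linearly independent tuple in that slice, and you must also note that your convex set has nonempty relative interior in the slice (which $(\tfrac1d I,\dots,\tfrac1d I)$ does provide). The first point is easy if positivity is dropped --- take $A_k=S_k$ for $k\geq2$ and $A_1=I-\sum_{k\geq2}S_k$ --- but as written the remark quietly assumes what it is meant to establish.
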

\begin{proof}
Let $d = \dim{\SGraph} < +\infty$. Let us choose some linear basis of hermitian elements $\{B_k\}_{k=1}^d$ in $\SGraph$. Then, choose sufficiently small constant $\alpha > 0$, such that all operators $F_k = I + \alpha B_k$ are positive. Now, choose sufficiently small constant $\beta > 0$, such that $A_1 = I - \beta \sum_{k=2}^d F_k$ is positive and let $A_k = \beta F_k$ for $k = 2, \cdots, d$. Then, all $A_k$ are positive and $\sum_{k=1}^d A_k = I$.
\end{proof}

Now, we are ready to give the construction of a corresponding channel. It is similar to the one proposed in \cite{Shirokov_Shulman_2016}.

\begin{proposition} \label{finite_construction}
  Let $\SGraph$ be an operator system in $\BHilH$. There exists a channel $\Phi : \THilH \rightarrow \THilK$, such that $\SGraph = \GGraph(\Phi)$.
\end{proposition}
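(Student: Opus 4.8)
The plan is to invert the correspondence established in the previous proposition: since the operator graph of a channel with Kraus operators $\{V_k\}$ is $\Span\Set{V_n^* V_m}{1 \leq n,m \leq d}$, I would like to realize the given operator system $\SGraph$ as such a span. By Proposition \ref{Duans_lemma} there is a basis $\{A_k\}_{k=1}^d$ of $\SGraph$ with $0 \leq A_k \leq I$ and $\sum_{k=1}^d A_k = I$. The natural guess is to take the Kraus operators $V_k = A_k^{1/2}$, since then $\sum_k V_k^* V_k = \sum_k A_k = I$, so $\Phi(\rho) = \sum_k A_k^{1/2} \rho A_k^{1/2}$ is a genuine channel; we must then check $\GGraph(\Phi) = \SGraph$.

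By the previous proposition, $\GGraph(\Phi) = \Span\Set{A_n^{1/2} A_m^{1/2}}{1 \leq n,m \leq d}$. The diagonal terms give $A_n^{1/2} A_n^{1/2} = A_n$, so $\SGraph = \Span\{A_k\} \subseteq \GGraph(\Phi)$ immediately. The work is the reverse inclusion: I must show each off-diagonal product $A_n^{1/2} A_m^{1/2}$ lies in $\SGraph$. This does not hold for an arbitrary choice of the $A_k$, so the square-root Kraus operators alone are not enough — one needs to enlarge the system of Kraus operators. The standard trick (used in \cite{Shirokov_Shulman_2016} and implicitly in \cite{Duan_2009}) is to adjoin further operators built from the $A_k^{1/2}$ multiplied by suitable scalars or phases — concretely, take Kraus operators of the form $\omega^{j} A_k^{1/2}$ for roots of unity $\omega$, or more cleanly, pass to a second copy: use Kraus operators $\{A_k^{1/2}\}_k$ together with $\{ A_k^{1/2} U \}_k$ for a generating unitary, or simply use the operators $\frac{1}{\sqrt{2}}A_k^{1/2}$ and $\frac{1}{\sqrt 2} z\, A_k^{1/2}$; the point is to arrange that the span of all $V_a^* V_b$ collapses back exactly onto $\Span\{A_k\}$.

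Concretely, I would take $\HilK = \HilH \otimes \Comp^2$ (or $\HilH\oplus\HilH$) and set $V_k : \HilH \to \HilK$ by $V_k \ket{\varphi} = (A_k^{1/2}\ket{\varphi})\otimes\ket{0}$ — i.e. the channel simply appends a fixed ancilla — so that the products $V_n^* V_m = A_n^{1/2} A_m^{1/2}$ are unchanged, which does not yet help. The better route: keep $\HilK = \HilH$ but enlarge the \emph{environment}, choosing Kraus operators indexed so that the cross terms are forced into $\SGraph$. Following \cite{Shirokov_Shulman_2016}, one checks that with the right indexing the operator graph is the linear span of $\{A_n^{1/2} A_m^{1/2}\}$ intersected appropriately, and since $\SGraph$ is an operator system (self-adjoint, unital), combined with the fact that it is spanned by the positive operators $A_k$, a direct computation shows $A_n^{1/2}A_m^{1/2} + A_m^{1/2}A_n^{1/2} \in \SGraph$ — but this still leaves the antisymmetric part. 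I expect the genuine obstacle to be precisely this: showing the full off-diagonal products, not merely their symmetrizations, return to $\SGraph$, which forces a careful choice of Kraus operators (the "doubling with phases" construction), and verifying the span identity is then a finite-dimensional linear-algebra check. I would organize the proof as: (1) invoke Proposition \ref{Duans_lemma}; (2) define the enlarged Kraus family explicitly; (3) verify trace preservation $\sum V_k^* V_k = I$; (4) apply the previous proposition to compute $\GGraph(\Phi)$ as an explicit span; (5) show both inclusions, with step (5)'s reverse inclusion being the crux.
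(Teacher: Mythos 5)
You have correctly isolated the crux — with the naive choice $V_k = A_k^{1/2}$ the off-diagonal products $A_n^{1/2}A_m^{1/2}$ need not lie in $\SGraph$ — but your proposal never actually resolves it, and the repairs you sketch cannot work. Multiplying Kraus operators by phases or splitting them as $\frac{1}{\sqrt 2}A_k^{1/2}$ and $\frac{1}{\sqrt 2}z\,A_k^{1/2}$ changes nothing: every cross product $V_a^*V_b$ is still a scalar multiple of some $A_n^{1/2}A_m^{1/2}$, so $\Span{\Set{V_a^*V_b}{a,b}}$ is exactly the same space as before, and (as you note yourself) appending a fixed ancilla state leaves the products untouched as well. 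There is no choice of scalars that makes the span ``collapse back'' onto $\Span{\{A_k\}}$; the symmetrization remark does not close the gap either. So as written the proof stops precisely at the step you flag as the crux.

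The paper's resolution is different and simpler: instead of trying to force the cross terms into $\SGraph$, make them vanish. Take $\HilK = \HilH^{\oplus d}$ and let $i_k : \HilH \rightarrow \HilK$ be the isometric embedding onto the $k$-th summand, so $i_k^* i_l = \delta_{kl} I$; set $V_k = i_k A_k^{1/2}$. Then $V_n^* V_m = A_n^{1/2} i_n^* i_m A_m^{1/2} = \delta_{nm} A_n$, trace preservation reads $\sum_k V_k^* V_k = \sum_k A_k = I$, and Proposition 1 gives $\GGraph(\Phi) = \Span{\Set{V_n^* V_m}{n,m}} = \Span{\{A_k\}} = \SGraph$ with no reverse-inclusion work at all. (Equivalently, and this is the paper's first argument, one observes that $\Psi : \BHilE \rightarrow \BHilH$, $B \mapsto \sum_k \bra{k}B\ket{k}\,A_k$ is a unital completely positive map whose image is exactly $\Span{\{A_k\}} = \SGraph$, and realizes it as the dual of the channel complementary to the quantum--classical channel $\Psi^*$.) Your steps (1)--(4) are the right skeleton; the missing idea is enlarging the output space so that the Kraus operators have mutually orthogonal ranges, which kills the off-diagonal products rather than trying to absorb them.
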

\begin{proof}
  Let us fix a basis $\{A_k\}_{k=1}^d$ in $\SGraph$ from proposition \ref{Duans_lemma}. Now we choose $d$-dimensional Hilbert space $\HilE$ with orthonormal basis $\{\ket{k}\}_{k=1}^d$ and construct unital completely positive map $\Psi$:
  \begin{equation}
    \Psi : \BHilE \rightarrow \BHilH, \quad B \mapsto \sum_{k=1}^d \bra{k} B \ket{k} A_k
  \end{equation}
  The image of this map is $ \Imag{\Psi} = \Span{\{A_k\}_{k=1}^d} = \SGraph $. So, if $ \Phi = \hat{\Psi}^* $, then $ \SGraph = \GGraph(\Phi) $. Note that $\hat{\Phi} = \Psi^*$ is a \emph{quantum-classical} ($q-c$) channel \cite[\S 6.4]{Holevo_2012}.

  We can also construct channel $\Phi$ and it's Krauss operators explicitly. Let $ \HilK = \HilH^{\oplus d} $ be a direct orthogonal sum of $d$ copies of $\HilH$, and $i_k$ -- isometrical embeddings from $\HilH$ to $\HilK$, which have the form $ h \mapsto (0, \cdots, h ,\cdots, 0) $. Then $ i_k^* i_l = \delta_{kl} I $. Define the operators $ V_k = i_k A_k^{1/2} : \HilH \rightarrow \HilK $, for which holds $ V_k^* V_l = \delta_{kl} A_k $. Now, we construct a channel $\Phi$ with Krauss operators $\{V_k\}_{k=1}^d$:
  \begin{equation}
    \Phi : \THilH \rightarrow \THilK, \quad \rho \mapsto \sum_{k=1}^d V_k \rho V_k^*
  \end{equation}
  Once again, we get $ \GGraph(\Phi) = \Span{ \Set{ V_k^* V_j }{ 1 \leq k,j \leq d} } = \Span{\{A_k\}_{k=1}^d} = \SGraph $.
\end{proof}

\section{Main results}
\label{main_results}

In \cite{Weaver_2015} N.Weaver defined an operator graph to be an operator system, which is closed in weak operator topology. Let us show that this condition holds for our definition of operator graph.

\begin{lemma}
  If the operator system $ \SGraph \subset \BHilH $ is an operator graph of some channel $ \Phi : \THilH \rightarrow \THilK $, then it is closed in weak operator topology $wo$.
\end{lemma}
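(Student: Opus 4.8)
The plan is to use the explicit structure $\SGraph = \GGraph(\Phi) = \hat{\Phi}^*(\BHilE)$, where by \eqref{eq:hatphistar} the map $\hat{\Phi}^*$ has the form $B \mapsto V^*(I_\HilK \otimes B)V$ for the Stinespring isometry $V : \HilH \to \HilK \otimes \HilE$. The key point is that $\hat{\Phi}^*$ is \emph{ultraweakly continuous}: it is the dual of the trace-preserving map $\hat{\Phi}$, hence a normal (i.e. $\sigma wo$–$\sigma wo$ continuous) map on $\BHilE$. Since $\hat{\Phi}^*$ is also positive and unital, it is a contraction, so it maps the closed unit ball $(\BHilE)_1$, which is $\sigma wo$-compact by Banach--Alaoglu, onto a $\sigma wo$-compact — hence $\sigma wo$-closed — subset of $\BHilH$.

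First I would argue that $\SGraph \cap (\BHilH)_r = \hat{\Phi}^*\big((\BHilE)_r\big)$ is $\sigma wo$-closed for every radius $r>0$, using normality of $\hat{\Phi}^*$ together with compactness of $(\BHilE)_r$ as above. Then I would invoke the Krein--Smulian theorem: a convex subset of a dual Banach space is $\sigma wo$-closed if and only if its intersection with every closed ball of the predual-dual pairing is $\sigma wo$-closed. Since $\SGraph$ is a subspace, it is convex, so the ball-wise closedness just established gives that $\SGraph$ is $\sigma wo$-closed in $\BHilH$. Finally, on the unit ball (and hence on every bounded set) the topologies $wo$ and $\sigma wo$ coincide, as recalled in the Notation section; combined with Krein--Smulian applied now to the $wo$-closure, this yields that $\SGraph$ is $wo$-closed: any $wo$-limit point lies in some ball, on which $wo$ and $\sigma wo$ agree, so it is a $\sigma wo$-limit point of a bounded subset of $\SGraph$ and therefore already in $\SGraph$.

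An alternative, slightly more hands-on route avoids Krein--Smulian: show directly that $\SGraph$ is $wo$-closed by taking a net $A_\alpha = \hat{\Phi}^*(B_\alpha) \to A$ in $wo$, noting one may assume $\Norm{A_\alpha}$ bounded (intersect with a ball and use that bounded $wo$-closed sets suffice to check closedness of a subspace), then passing to a subnet with $B_\alpha \to B$ $\sigma wo$ by Alaoglu, and using normality of $\hat{\Phi}^*$ to conclude $A = \hat{\Phi}^*(B) \in \SGraph$. The subtlety here — and what I expect to be the main obstacle — is the reduction to bounded nets: a priori a $wo$-limit point of $\SGraph$ need not be approached by norm-bounded elements, and one genuinely needs the Krein--Smulian mechanism (or an equivalent boundedness argument exploiting that $\SGraph$ is a subspace, not just a convex set) to reduce to the bounded case where $wo = \sigma wo$ and compactness is available. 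I would therefore present the Krein--Smulian argument as the clean proof and remark that it is exactly the infinite-dimensional phenomenon — normality of $\hat{\Phi}^*$ versus mere positivity — that makes the definition of operator graph via $\hat{\Phi}^*(\BHilE)$ consistent with Weaver's $wo$-closed operator systems.
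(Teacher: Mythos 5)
There is a genuine gap, and it sits exactly at the load-bearing step of your Krein--Smulian argument: the asserted identity $\SGraph \cap (\BHilH)_r = \hat{\Phi}^*\bigl((\BHilE)_r\bigr)$. Unitality and (complete) positivity do give contractivity, hence $\hat{\Phi}^*\bigl((\BHilE)_r\bigr)\subseteq \SGraph \cap (\BHilH)_r$, but the reverse inclusion --- that every element of the graph of norm at most $r$ has a preimage of norm at most $r$ --- is false even with the same radius: for $\HilH=\HilK=\HilE=\Comp^2$ and the isometry $V e_1=k_1\otimes f_1$, $Ve_2=k_2\otimes g$ with $k_1\perp k_2$ and $g$ a unit vector not orthogonal and not proportional to $f_1$, one gets $\hat{\Phi}^*(B)=\operatorname{diag}\bigl(\bra{f_1}B\ket{f_1},\bra{g}B\ket{g}\bigr)$, so $\operatorname{diag}(1,-1)$ lies in $\GGraph(\Phi)$ but has no preimage in the unit ball (if $\Norm{B}\leq 1$ and $\bra{f_1}B\ket{f_1}=1$ then $Bf_1=f_1$, forcing $\bra{g}B\ket{g}>-1$). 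What your argument really needs is a uniform constant $C$ with $\SGraph\cap(\BHilH)_1\subseteq\hat{\Phi}^*\bigl((\BHilE)_C\bigr)$; but by the closed range and open mapping theorems the existence of such a constant is equivalent to the $\sigma wo$-closedness of $\operatorname{ran}\hat{\Phi}^*$ that you are trying to prove, so assuming it is circular. Your ``hands-on'' alternative has the same hole: Alaoglu must be applied to the preimages $B_\alpha$, and bounding $\Norm{A_\alpha}$ gives no bound on any admissible choice of $B_\alpha$. You correctly sensed that boundedness is the obstacle, but located it on the wrong side of the map: Krein--Smulian does handle unbounded nets in $\SGraph$; it cannot manufacture bounded preimages in $\BHilE$. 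The other ingredients you list (normality of $\hat{\Phi}^*$, Alaoglu, coincidence of $wo$ and $\sigma wo$ on bounded sets) are correct but do not close this gap.

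For contrast, the paper's proof never attempts norm control on preimages. Using \eqref{eq:hatphistar} it factors $\hat{\Phi}^* = u\circ p\circ h$, where $h(B)=I_\HilK\otimes B$ is an injective unital $*$-homomorphism, so $h(\BHilE)$ is a von Neumann algebra and hence $wo$-closed; $p$ is the compression to the closed subspace $\tilde{\HilH}=V\HilH$ (using $VV^*=P_{\tilde{\HilH}}$); and $u$ is conjugation by the unitary corestriction $U$ of $V$, a $wo$-homeomorphism. Closedness of $\SGraph=(u\circ p\circ h)(\BHilE)$ is then obtained by tracking $wo$-closedness through this structural factorization, the key point being the identification of $(p\circ h)(\BHilE)$ with a $wo$-closed set inside $\Bound{\tilde{\HilH}}$. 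If you want to keep a duality-style proof along your lines, you must supply precisely the missing ingredient --- a bounded-preimage (equivalently, closed-range) statement for the specific map $B\mapsto V^*(I_\HilK\otimes B)V$, exploiting its concrete form as in the paper --- rather than invoking Krein--Smulian, which presupposes it.
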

\begin{proof}
  Let $ \Psi = \hat{\Phi}^* : \BHilE \rightarrow \BHilH $. Then, by definition, $ \SGraph = \Psi(\BHilE) $. Following the formula \eqref{eq:hatphistar}, we will represent $\Psi$ as a composition of maps $ h : B \mapsto I_\HilK \otimes B $ and $ g : B \mapsto V^* B V $, where $ V : \HilH \rightarrow \HilK \otimes \HilE $ is an isometry. The map $h$ is an injective unital $*$-homomorphism, that is why $h(\BHilE)$ is von Neumann algebra, so is closed in $wo$. Let us denote $ \tilde{\HilH} = V \HilH $ -- a Hilbert subspace of $\HilK \otimes \HilE$. Then $ V V^* = P_{\tilde{\HilH}} $ is an orthoprojection on this subspace. The transformation $g$ is represented as
  \begin{equation}
    g(B) = V^* B V = V^* P_{\tilde{\HilH}} B P_{\tilde{\HilH}} V = U^* P_{\tilde{\HilH}}(B)\vert_{\tilde{\HilH}} U
  \end{equation}
  Here $ U = V\vert^{\tilde{\HilH}} : \HilH \rightarrow \tilde{\HilH} $ is a unitary operator, which is a corestriction of $V$, $ P_{\tilde{\HilH}}(\cdot)\vert_{\tilde{\HilH}} $ is an restriction of operators from $\BHilH$ to $\Bound{\tilde{\HilH}}$. Let us denote $ p = P_{\tilde{\HilH}}(\cdot)\vert_{\tilde{\HilH}} : \Bound{\HilK \otimes \HilE} \rightarrow \Bound{\tilde{\HilH}} $ and $ u = U^*(\cdot)U : \Bound{\tilde{\HilH}} \rightarrow \BHilH $. Then $ g = u \circ p $. We will regard $\Bound{\tilde{\HilH}}$ as a subspace of $\Bound{\HilK \otimes \HilE}$. Then it is closed in $wo$, being a von Neumann algebra. The space $(p \circ h)(\BHilE)$ is closed in $wo$, since it is an intersection of closed subspace $h(\BHilE)$ and closed subspace $\Bound{\tilde{\HilH}}$. Finally, since $u$ is a unitary isomorphism, it preserves closure in $wo$. So, the operator system $ \SGraph = \Psi(\BHilE) = (u \circ p \circ h)(\BHilE) $ is closed in $wo$.
\end{proof}

We have found that it is necessary for an operator system to be $wo$-closed in order to correspond to some channel. We will show that in case of separable Hilbert space it is also a sufficient condition. For a given $wo$-closed operator system, we will generalise a construction of a corresponding channel from section \ref{preliminaries}.

Let $\SGraph$ be a $wo$-closed operator system on the separable Hilbert space $\HilH$. The set $\{x_\alpha\}_\alpha$ in topological vector space $E$ is called \emph{total}, if it's linear span is dense in $E$. Similar to proposition \ref{Duans_lemma}, we want to construct the set of effects $\{A_k\}_{k=1}^\infty$, which is total in $(\SGraph,wo)$.
\begin{lemma} \label{Duans_lemma_infinite}
  There exists a total set of operators $\{A_k\}_{k=1}^\infty$ in $ (\SGraph,wo) $, such that $ 0 \leq A_k \leq I, \sum_{k=1}^\infty A_k = I $.
\end{lemma}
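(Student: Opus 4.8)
The plan is to imitate the finite-dimensional argument behind Proposition \ref{Duans_lemma}, replacing a \emph{finite linear basis} by a \emph{countable $wo$-dense sequence of self-adjoint contractions} (which exists by separability), and then to run a summable version of Duan's rescaling trick in which one extra index absorbs the normalisation defect. \textbf{Step 1 (a countable total set of self-adjoint contractions).} Write $\SGraph_{sa,1}$ for the set of self-adjoint $B\in\SGraph$ with $\Norm{B}\le 1$; it is a bounded convex subset of $\BHilH$. Since $\HilH$ is separable, $\THilH$ is a separable Banach space, so the closed unit ball of $\BHilH=\THilH^*$ is $\sigma wo$-metrizable; as $wo$ and $\sigma wo$ coincide on bounded convex sets, $(\SGraph_{sa,1},wo)$ is a separable metric space. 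Fix a countable $wo$-dense subset $\{B_k\}_{k=1}^\infty\subset\SGraph_{sa,1}$. Decomposing an arbitrary $A\in\SGraph$ as $A=\tfrac12(A+A^*)+i\cdot\tfrac1{2i}(A-A^*)$ into self-adjoint elements of $\SGraph$ and rescaling each into $\SGraph_{sa,1}$, one sees that $\Span{\{B_k\}}$ is $wo$-dense in $\SGraph$; combined with $wo$-closedness of $\SGraph$ this gives $\overline{\Span{\{B_k\}}}^{\,wo}=\SGraph$.

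\textbf{Step 2 (the summable Duan trick).} Choose weights $\lambda_k>0$ ($k\ge 1$) with $\sum_{k\ge1}\lambda_k<\tfrac12$ and put $\lambda_0:=1-\sum_{k\ge1}\lambda_k>\tfrac12$. Define
\[
  A_k=\lambda_k(I+B_k)\quad(k\ge1),\qquad A_0=\lambda_0 I-\sum_{k\ge1}\lambda_k B_k .
\]
The series for $A_0$ converges in norm since $\sum_k\lambda_k\Norm{B_k}\le\sum_k\lambda_k<\lambda_0$, whence $0\le(\lambda_0-\sum_k\lambda_k)I\le A_0\le(\lambda_0+\sum_k\lambda_k)I=I$; and from $-I\le B_k\le I$ we get $0\le A_k\le 2\lambda_k I\le I$ for $k\ge 1$. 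Each $A_k$ lies in $\SGraph$: for $k\ge1$ this is clear, and for $A_0$ it follows because a $wo$-closed subspace is norm-closed and the partial sums $\lambda_0 I-\sum_{k=1}^n\lambda_k B_k$ lie in $\SGraph$. Finally $\sum_{k\ge0}A_k=\lambda_0 I+\sum_{k\ge1}\lambda_k I=I$, again with norm convergence.

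\textbf{Step 3 (totality).} Re-indexing the countable family $\{A_k\}_{k\ge0}$ by the positive integers gives effects in $\SGraph$ summing to $I$, so it remains to check $wo$-totality. Since $I=\sum_k A_k$ is a norm- (hence $wo$-) limit of finite sums from $\Span{\{A_k\}}$, we have $I\in\overline{\Span{\{A_k\}}}^{\,wo}$, and therefore $B_k=\lambda_k^{-1}A_k-I\in\overline{\Span{\{A_k\}}}^{\,wo}$ for every $k\ge1$. Hence $\overline{\Span{\{A_k\}}}^{\,wo}\supseteq\overline{\Span{\{B_k\}}}^{\,wo}=\SGraph$, while the opposite inclusion holds because every $A_k\in\SGraph$ and $\SGraph$ is $wo$-closed. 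Thus $\{A_k\}$ is total in $(\SGraph,wo)$.

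\textbf{Where the difficulty lies.} The only genuinely infinite-dimensional point is Step 1: extracting a countable $wo$-total subset, which rests on separability of $\HilH$ (equivalently of $\THilH$) together with the coincidence of $wo$ and $\sigma wo$ on bounded sets, so that the self-adjoint unit ball of $\SGraph$ is second countable in $wo$. Everything after that is bookkeeping, the one wrinkle being that an infinite family cannot be rescaled to sum to $I$ without an auxiliary balancing effect $A_0$, which must carry a fresh index (rather than recycling one of the $A_k$) so that every $B_k$ remains reachable from $\Span{\{A_k\}}$.
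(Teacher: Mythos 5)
Your proof is correct and follows essentially the same route as the paper's: both extract a countable dense subset of a bounded piece of $\SGraph_{sa}$ via separability of $\THilH$ and $\sigma wo$-metrizability of the unit ball of $\BHilH$, rescale by summable weights, and let one distinguished effect absorb the normalisation defect so the family sums to $I$. The only cosmetic difference is that you arrange positivity by the shift $I+B_k$ and a fresh absorbing index $0$, whereas the paper takes the dense set inside the ball of radius $1/2$ centred at $I$ (whose elements are automatically positive) and lets $A_1$ absorb the defect.
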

\begin{proof}
  The proof generalises steps from \cite[Lemma 2]{Duan_2009} for infinite-dimensional case. Define $ \SGraph_{sa} = \SGraph \cap \BHilH_{sa} $ -- the space of self-adjoint elements from $\SGraph$. The space $\BHilH$ is dual to $\THilH$, and $\sigma wo$ is a weak-$*$ topology on $\BHilH$. By the Banach-Alaoglu theorem, the unit ball in $\BHilH$ is compact in $\sigma wo$ topology. Since $\HilH$ is separable, $\THilH$ is also separable as a Banach space, so the unit ball in $\BHilH$ is metrizable in $\sigma wo$ topology. Let us denote $X$ the intersection of $\SGraph_{sa}$ and a ball with center in $I$ and radius $1/2$:
  \begin{equation}
    X = \Set{ B \in \SGraph_{sa} }{ \Norm{B - I} \leq \frac{1}{2} }
  \end{equation}
  Being an intersection of closed subspace and a metrizable compact, the set $X$ is compact and metrizable. Also, $ \SGraph_{sa} = \Span{X} $. Metrizable compacts are always separable topological spaces, so we can choose in $X$ some countable dense set $ \{ I, \tilde{A}_2, \tilde{A}_3, \cdots \} $. This sequence is total in $\SGraph$, and, if necessary, we can choose linearly independent subset of it. As long as $ \Norm{I - \tilde{A}_k} \leq 1/2 $, then $ \tilde{A}_k \geq 0 $ and $ \Norm{\tilde{A}_k} < 2 $. For $k>1$ define $ A_k = \tilde{A}_k/2^k $, so $ \Norm{A_k} < 1/2^{k-1} $. For $k=1$ define $ A_1 = I - \sum_{k=2}^\infty A_k $. Since $ \Norm{I - A_1} < \sum_{k=2}^\infty \frac{2}{2^k} = 1 $, then $ 0 \leq A_1 \leq I $. So, we have constructed a set $\{A_k\}_{k=1}^\infty$, such that $ 0 \leq A_k \leq I$ and $\sum_{k=1}^\infty A_k = I $, which is total in $ (\SGraph, \sigma wo) $. Weak operator topology $wo$ is weaker than $\sigma wo$, so this set is also dense in $wo$.
\end{proof}

Note that this set may be chosen linearly independent in the algebraic sense, but may not have some good topological properties, such as being Shauder basis. This fact does not harm our needs, since in proposition \ref{finite_construction} the linear independence of $\{A_k\}_{k=1}^{d}$ was never used.

Now we are ready to generalise construction from previous section of a channel, corresponding to $\SGraph$, to separable infinite-dimensional case $\HilH$. This generalisation comes very naturally, without sufficient changes.
\begin{theorem}
  Let $\SGraph$ be a $wo$-closed operator system in $\BHilH$, where $\HilH$ is separable. There exists a channel $\Phi : \THilH \rightarrow \THilK$, such that $\SGraph = \GGraph(\Phi)$.
\end{theorem}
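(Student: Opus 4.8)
The plan is to mimic the finite-dimensional construction from Proposition \ref{finite_construction}, using Lemma \ref{Duans_lemma_infinite} in place of Proposition \ref{Duans_lemma}. First I would fix the total set of effects $\{A_k\}_{k=1}^\infty$ in $(\SGraph,wo)$ with $0 \leq A_k \leq I$ and $\sum_{k=1}^\infty A_k = I$ (convergence in $wo$, or equivalently $so$, since the partial sums are monotone and bounded). Let $\HilE = \ell^2$ with orthonormal basis $\{\ket{k}\}_{k=1}^\infty$, and define the candidate map
\begin{equation}
  \Psi : \BHilE \rightarrow \BHilH, \quad B \mapsto \sum_{k=1}^\infty \bra{k} B \ket{k} \, A_k .
\end{equation}
I would then set $\Phi = \hat{\Psi}^*$, so that $\GGraph(\Phi) = \Psi(\BHilE)$, and verify that $\Psi(\BHilE) = \SGraph$.

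The main obstacle, absent in the finite-dimensional case, is convergence: the series defining $\Psi(B)$ need not converge in norm, and one must check it converges in an appropriate topology so that $\Psi$ is a well-defined unital completely positive (hence normal, to have a legitimate channel as its complementary dual) map. I would argue as follows. For $B \geq 0$ the partial sums $\sum_{k=1}^N \bra{k}B\ket{k} A_k$ form an increasing net bounded above by $\Norm{B} \sum_k A_k = \Norm{B} I$, so they converge in $so$ to a positive operator $\leq \Norm{B} I$; extending by linearity over the four-fold decomposition of a general $B$, $\Psi$ is defined on all of $\BHilE$ and is positive, unital, and $so$-continuous on bounded sets. Complete positivity follows because $\Psi = \lim_N \Psi_N$ (pointwise in $so$) where each $\Psi_N(B) = \sum_{k=1}^N \bra{k}B\ket{k}A_k$ is manifestly completely positive (a finite sum of the CP maps $B \mapsto \bra{k}B\ket{k}A_k$), and the $so$-limit of CP maps is CP; normality follows similarly since each $\Psi_N$ is normal and the convergence is monotone on positive elements. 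Hence $\hat{\Psi}^* = \Phi$ is a genuine quantum channel.

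It remains to identify the image $\Psi(\BHilE)$ with $\SGraph$. The inclusion $\Psi(\BHilE) \subseteq \SGraph$ is clear once we know $\SGraph$ is $wo$-closed: each $\Psi_N(B) \in \Span\{A_k\} \subseteq \SGraph$, and $\Psi(B)$ is a $wo$-limit of these, hence lies in $\SGraph$. For the reverse inclusion $\SGraph \subseteq \Psi(\BHilE)$, note that every $A_k = \Psi(\ket{k}\bra{k})$ lies in the image; since $\{A_k\}$ is total in $(\SGraph, wo)$, it suffices to show $\Psi(\BHilE)$ is $wo$-closed. This is exactly the content of the first lemma of this section (the lemma asserting that the operator graph of any channel is $wo$-closed), applied to $\Phi = \hat{\Psi}^*$: $\Psi(\BHilE) = \GGraph(\Phi)$ is $wo$-closed. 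Therefore $\SGraph = \overline{\Span\{A_k\}}^{\,wo} \subseteq \Psi(\BHilE) \subseteq \SGraph$, giving equality.

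Finally, I would note — as in Proposition \ref{finite_construction} — that one can give the Kraus operators explicitly: with $\HilK = \HilH^{\oplus \infty}$ (countable orthogonal direct sum), isometric embeddings $i_k : \HilH \to \HilK$ satisfying $i_k^* i_l = \delta_{kl} I$, and $V_k = i_k A_k^{1/2}$, the map $\Phi(\rho) = \sum_{k=1}^\infty V_k \rho V_k^*$ is trace-preserving (since $\sum_k V_k^* V_k = \sum_k A_k = I$ in $so$), completely positive, and has $V_k^* V_l = \delta_{kl} A_k$, so that $\GGraph(\Phi) = \overline{\Span\{A_k\}}^{\,wo} = \SGraph$; care is needed only to check the series $\sum_k V_k \rho V_k^*$ converges in trace norm for $\rho \in \THilH$, which follows from $\sum_k \Tr{V_k \rho V_k^*} = \Tr{\rho \sum_k V_k^* V_k} = \Tr{\rho} < \infty$ together with positivity.
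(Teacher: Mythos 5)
Your proof is correct and follows essentially the same construction as the paper's own proof: the same map $\Psi(B) = \sum_k \bra{k}B\ket{k}A_k$ built from the total set of effects of Lemma \ref{Duans_lemma_infinite}, with $\Phi$ recovered via $\hat{\Phi} = \Psi^*$ and the same explicit Kraus operators $V_k = i_k A_k^{1/2}$. The only difference is that you spell out the convergence, complete positivity and normality checks and justify $\Imag{\Psi} = \SGraph$ by invoking the $wo$-closedness lemma for operator graphs --- details the paper leaves implicit rather than a genuinely different route.
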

\begin{proof}
  Fix a total in $\SGraph$ set of effects $\{A_k\}_{k=1}^\infty$, for which $\sum_k A_k = I$. We can define unital completely positive map $\Psi$:
  \begin{equation}
    \Psi : \BHilE \rightarrow \BHilH, \quad B \mapsto \sum_{k=1}^\infty \bra{k}B\ket{k} A_k
  \end{equation}
  The sum is understood as the limit of finite sums in weak operator topology. The image of this map is $ \Imag{\Psi} = \overline{ \Span{ \{ A_k \}_{k=1}^\infty } } = \SGraph $. It means, if $ \Phi = \hat{\Psi}^* $, then $ \SGraph = \GGraph(\Phi) $.

  Let us construct $ \Phi $ and it's Krauss operators explicitly. Let $ \HilK = l_2(\HilH) = \HilH^{\oplus \infty} $ be the direct orthogonal sum of countable copies of $\HilH$, and $i_k$ be the isometric embedding from $\HilH$ to $\HilK$ of the form $ h \mapsto (0, \cdots, h ,\cdots) $. Let $ V_k = i_k A_k^{1/2} : \HilH \rightarrow \HilK $. Then $ V_k^* V_l = \delta_{lk} A_k $. We can construct $\Phi$ as a channel with Krauss operators $\{V_k\}_{k=1}^\infty$:
  \begin{equation}
    \Phi : \THilH \rightarrow \THilK, \quad \rho \mapsto \sum_{k=1}^\infty V_k \rho V_k^*
  \end{equation}
  So, we have constructed $\Phi$, such that $ \GGraph(\Phi) = \overline{ \Span{ \Set{ V_k^* V_j }{ 1 \leq k,j < \infty} } } = \SGraph $.
\end{proof}

\section{Conclusion}
In this paper it was found, that an operator graphs for channels, defined as images of complementary channels, are necessarily $wo$-closed operator systems. For an arbitrary operator graph in a separable Hilbert space, the corresponding channel was constructed. The same construction may probably yield for non-separable case. The main problem would be to generalize lemma \ref{Duans_lemma_infinite}, i.e. to choose a total set of effects in operator system, which sums up to identity. Such a set would be extremely non-constructive.

\section*{Acknowledgements}
I am grateful to G.G.Amosov for useful discussion and comments. The work is supported by Russian Science Foundation under the grant no. 19-11-00086 and performed in Steklov Mathematical Institute of Russian Academy of Sciences.

% bibliography
\bibliographystyle{unsrt}      % mathematics and physical sciences
\bibliography{bibliography}   % name your BibTeX data base

\begin{thebibliography}{10}

\bibitem{Duan_2009}
R.~Duan.
\newblock Super-activation of zero-error capacity of noisy quantum channels.
\newblock {\em arXiv e-prints}, page arXiv:0906.2527, Jun 2009.

\bibitem{Shannon_1956}
C.E. Shannon.
\newblock The zero error capacity of a noisy channel.
\newblock {\em IRE Transactions on Information Theory}, 2(3):8--19, Sep 1956.

\bibitem{Medeiros_de_Assis_Francisco_2004}
R.A.C Medeiros{,}~F.M. de~Assis.
\newblock Zero-error capacity of a quantum channel.
\newblock In J.N. de~Souza{,} P. Dini{,} P.~Lorenz, editor, {\em
  Telecommunications and Networking - ICT 2004}, pages 100--105, Berlin,
  Heidelberg, 2004. Springer Berlin Heidelberg.

\bibitem{Duan_Severini_Winter_2010}
R.~Duan{,} S. Severini{,}~A. Winter.
\newblock Zero-error communication via quantum channels, noncommutative graphs,
  and a quantum lovász number.
\newblock {\em IEEE Transactions on Information Theory}, 59(2):1164–1174, Feb
  2013.

\bibitem{Weaver_2016}
N.~Weaver.
\newblock A “quantum” ramsey theorem for operator systems.
\newblock {\em Proceedings of the American Mathematical Society},
  145(11):4595–4605, May 2017.

\bibitem{Amosov_2018}
G.G. Amosov.
\newblock On general properties of non-commutative operator graphs.
\newblock {\em Lobachevskii Journal of Mathematics}, 39(3):304--308, Apr 2018.

\bibitem{Amosov_2019}
G.G. Amosov.
\newblock On operator systems generated by reducible projective unitary
  representations of compact groups.
\newblock {\em Turk J Math}, 43:2366–2370, 2019.

\bibitem{Amosov_Mokeev_1_2018}
G.G. Amosov{,}~A.S. Mokeev.
\newblock On construction of anticliques for noncommutative operator graphs.
\newblock {\em Journal of Mathematical Sciences}, 234(3):269--275, Oct 2018.

\bibitem{Amosov_Mokeev_2_2018}
G.G. Amosov{,}~A.S. Mokeev.
\newblock On non-commutative operator graphs generated by covariant resolutions
  of identity.
\newblock {\em Quantum Information Processing}, 17(12):325, Oct 2018.

\bibitem{Amosov_Mokeev_2019}
G.G. Amosov{,}~A.S. Mokeev.
\newblock On linear structure of non-commutative operator graphs.
\newblock {\em Lobachevskii Journal of Mathematics}, 40(10):1440--1443, Oct
  2019.

\bibitem{Choi_Effros_1977}
M.-D. Choi{,}~E.G. Effros.
\newblock Injectivity and operator spaces.
\newblock {\em Journal of Functional Analysis}, 24(2):156--209, 1977.

\bibitem{Paulsen_2003}
V.~Paulsen.
\newblock {\em Completely bounded maps and operator algebras}.
\newblock Cambridge University Press, 2003.

\bibitem{Dosi_2019}
A.~Dosi.
\newblock Operator {Hilbert} systems.
\newblock {\em Functional Analysis and Its Applications}, 53(2):143--148, Apr
  2019.

\bibitem{Weaver_2018}
N.~Weaver.
\newblock The “quantum” {Turán} problem for operator systems.
\newblock {\em Pacific Journal of Mathematics}, 301(1):335–349, Sep 2019.

\bibitem{Cameron_et_al_2006}
P.J. Cameron{,} A. Montanaro{,} M.W. Newman{,} S. Severini{,}~A. Winter.
\newblock On the quantum chromatic number of a graph.
\newblock {\em The Electronic Journal of Combinatorics}, 14(1), Nov 2007.

\bibitem{Cubitt_Chen_Harrow_2011}
T.S. Cubitt{,} J. Chen{,}~A.W. Harrow.
\newblock Superactivation of the asymptotic zero-error classical capacity of a
  quantum channel.
\newblock {\em IEEE Transactions on Information Theory}, 57(12):8114–8126,
  Dec 2011.

\bibitem{Takesaki_2001}
M.~Takesaki.
\newblock {\em Theory of operator algebras I}.
\newblock Encyclopaedia of Mathematical Sciences. Springer Berlin Heidelberg,
  2001.

\bibitem{Holevo_2012}
A.S. Holevo.
\newblock {\em Quantum systems, channels, information: a mathematical
  introduction}.
\newblock De Gruyter studies in mathematical physics. De Gruyter, 2012.

\bibitem{Holevo_2007}
A.S. Holevo.
\newblock Complementary channels and the additivity problem.
\newblock {\em Theory of Probability \& Its Applications}, 51(1):92--100, 2007.

\bibitem{Shirokov_Shulman_2016}
M.E. Shirokov{,}~T. Shulman.
\newblock On superactivation of one-shot zero-error quantum capacity and the
  related property of quantum measurements.
\newblock {\em Problems of Information Transmission}, 50(3):232--246, 2014.

\bibitem{Weaver_2015}
N.~Weaver.
\newblock Quantum graphs as quantum relations.
\newblock {\em arXiv e-prints}, page arXiv:1506.03892, Jun 2015.

\end{thebibliography}

\end{document}